\newtheorem{eg}{Example}[section]
\newcommand{\mb}[1]{\ensuremath{\boldsymbol{#1}}}
\newcommand{\onee}{\mathbbm{1}}
\DeclarePairedDelimiter{\ceil}{\lceil}{\rceil}
\def\alg{\textsc{PR}}
\begin{document}
	
	
	
	\TITLE{
	Periodic Reranking for Online Matching of Reusable Resources 
} 
	
	\ARTICLEAUTHORS{%
		\AUTHOR{	Rajan Udwani}
	\AFF{UC Berkeley, IEOR, 
			\EMAIL{rudwani@berkeley.edu}}
	} 
	
	\ABSTRACT{%
We consider a generalization of the vertex weighted online bipartite matching problem where the offline vertices, called resources, are \emph{reusable}. In particular, when a resource is matched it is unavailable for a deterministic time duration $d$ after which it becomes available for a re-match. Thus, a resource can be matched to many different online vertices over a period of time. 

While recent work on the problem has resolved the asymptotic case where we have large starting inventory (i.e., many copies) of every resource, we consider the (more general) case of \emph{unit inventory} and give the first algorithm that is provably better than the na\"ive greedy approach which has a competitive ratio of (exactly) 0.5. In particular, we achieve a competitive ratio of 0.589 against an LP relaxation of the offline problem. Our algorithm generalizes the classic Ranking and Perturbed Greedy algorithms for online matching, by \emph{reranking} resources over time. While reranking resources frequently has the same worst case performance as greedy, we show that reranking \emph{intermittently} on a \emph{periodic schedule} succeeds in addressing reusability of resources and performs significantly better than greedy in the worst case. 
	}%
	
	
	\maketitle
	
	%
	
	
	\section{Introduction}\label{intro}	
On a platform such as Airbnb, where heterogeneous customers arrive over time, ensuring a good match between property listings (resources) and customers is a challenging task. Of the many challenges, a well studied one stems from the sequential and uncertain (online) nature of demand. There is a wealth of work on policies for pricing and allocating resources when the demand is online (online matching \citep{survey}, online assortments \citep{negin}, pricing and network revenue management \citep{txt}). Most of these works focus on resources that are \emph{not reusable} and may be allocated at most once. However, a feature common to most online platforms in the sharing economy is that resources are \emph{reusable} and a given unit of a resource may be re-allocated several times. This has led to a surge of interest in designing policies for online allocation of reusable resources. Nonetheless, many fundamental questions remain unanswered. In this paper, we study one such question.  

We start by defining the setting of online matching with reusable resources (OMR). Consider a set of reusable resources $I$ with unit inventory and price $(r_i)_{i\in I}$. Requests for resources arrive sequentially and each request is for up to one unit of some subset of resources. Formally, let $T$ denote the set of requests, also called \emph{arrivals}, and let $A$ denote the set of arrival times. Up on arrival of request $t\in T$ at time $a(t)\in A$, we observe set of edges $(i,t)\in E$ incident on $t$, and must make an immediate and irrevocable decision to match $t$ to a neighboring resource or reject the request. At time $a(t)$, we have no knowledge of future arrivals, which could arrive in an adversarial order. When resource $i\in I$ is matched to an arrival, we obtain reward $r_i$ and the resource is unavailable for subsequent arrivals during the next $d$ units of time i.e., if $i$ is matched to $t$, it is unavailable during the interval $(a(t),a(t)+d]$. After this interval, resource $i$ is available for re-match. The duration $d$ is known to us and our objective is to maximize total reward. 
An online algorithm for this problem 
	is evaluated relative to the optimal offline solution to the problem. The offline solution is computed with knowledge of the entire sequence of arrivals $T$ and arrival times $A$. This comparison is quantified through the \emph{competitive ratio}, defined as follows,  
	\[\min_{G}\,\, \frac{\text{Online}(G)}{\text{Offline}(G)},\]
	where $G$ represents an instance of the problem, characterized by sets $I,\, T,$ and $A$ and values $(r_i)_{i\in I}$ and $d$. Online$(G)$ represents the (expected) total reward of a (randomized) online algorithm. We are interested in designing an online algorithm with a strong competitive ratio guarantee for this problem.
	
 OMR is a fundamental generalization of the classic online bipartite matching (OM) problem \citep{kvv}, where resources have identical rewards and each resource can be matched at most once. To see this, observe that when $d\geq a(T)$ and rewards $r_i=r_j\,\forall i,j\in I$, then every resource can be matched to at most one arrival and the objective simplifies to finding the maximum matching. 
	For OM, it is well known that no deterministic algorithm can achieve a competitive ratio better than 0.5 \citep{kvv}. As OMR generalizes OM, this upper bound applies directly. In fact, the greedy algorithm that matches every arrival to an available neighbor (and breaks ties deterministically) is exactly 0.5 competitive for both OM \citep{kvv} and OMR \citep{reuse}. The main algorithm design goal in these problems and their many variations, is to \emph{design an online algorithm that outperforms greedy}.
	
	For the OM problem, \cite{kvv} proposed the Ranking algorithm that randomly ranks resources at the beginning and then matches each arrival to the best ranked neighbor that is available. They showed that Ranking is $(1-1/e)$ competitive for OM and this is the best possible guarantee for any online algorithm. For the OMR problem, 
	no algorithm with guarantee better than 0.5 was known prior to this work. We establish the following result for this problem.
	
	\begin{theorem} \label{ref}
		There is a randomized algorithm for OMR (Algorithm \ref{pr} with $\beta=0.89$), with competitive ratio 0.589.
		\end{theorem}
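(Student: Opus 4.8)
The plan is to establish the bound by a randomized primal–dual (charging) argument against the LP relaxation of offline OMR, extending the analysis of Ranking and Perturbed Greedy to the reusable, periodically reranked regime. First I would write the relaxation: $\max \sum_{(i,t)\in E} r_i x_{it}$ subject to $\sum_i x_{it}\le 1$ for every arrival $t$, and the reusability constraints $\sum_{t:\,a(t)\in W} x_{it}\le 1$ for every resource $i$ and every length-$d$ window $W$ (valid because a matched resource is busy for $d$ units), with $x\ge 0$. The dual has a variable for each arrival and each resource–window constraint, and an edge constraint of the form ``arrival dual plus the resource duals of windows covering $a(t)$ is at least $r_i$.'' By weak duality it then suffices to construct, from a run of \alg, dual values that (i) have expected objective equal to $\mathbb{E}[\alg]$ and (ii) are feasible in expectation up to a factor $F$; these give $\opt\le\opt_{\mathrm{LP}}\le \tfrac{1}{F}\,\mathbb{E}[\alg]$, i.e.\ competitive ratio at least $F$.

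Next I would fix the charging. \alg\ partitions the time axis into consecutive periods of length $\Theta(d)$; at the start of each period every resource $i$ draws a fresh i.i.d.\ seed $\omi\sim U[0,1]$, and within a period each arriving $t$ is matched to the available neighbour maximizing a perturbed score built from $r_i$ and a perturbation $\psi_\beta(\omi)$, where the parameter $\beta$ damps the classical $1-1/e$ perturbation to hedge against reuse. When $t$ is matched to $i$, I split the reward $r_i$ between the dual of $t$ and a resource-side increment (attributed to the window/period carrying the match) according to $\psi_\beta(\omi)$, exactly as in Perturbed Greedy, so that the dual charge per match equals $r_i$ and property (i) holds by construction.

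The heart of the proof is approximate feasibility (ii) for a fixed edge $(i,t)$. Following the Ranking template I would condition on the seeds $\omii$ of all other resources and on the blocking pattern $B_i$ recording when $i$ is rendered busy by earlier matches, and invoke a monotonicity/dominance lemma: within the period containing $t$ there is a critical seed threshold $\ui=\ui(\omb)$ such that if $\omi$ is below $\ui$ then $i$ is matched inside the period (so its resource-side dual is charged), while if $\omi$ exceeds $\ui$ then $i$ is available when $t$ arrives, which forces $t$ to be matched to a resource of perturbed score at least that of $i$ and hence lower-bounds $t$'s dual. Integrating the two cases over $\omi\sim U[0,1]$ yields $\mathbb{E}[\,\cdot\mid\omb\,]\ge r_i\,F(\beta)$. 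The genuinely new difficulty, absent in non-reusable matching, is the \emph{coupling across periods}: a match made late in one period blocks $i$ well into the next, so the clean threshold monotonicity is disturbed by carry-over unavailability, and since we rerank, the resource-side dual accrues across several periods whose window constraints overlap. This is exactly where periodicity earns its keep — reranking makes the per-period seeds independent so the single-period estimates compose, while a period length of order $d$ caps the fraction of any window lost to spillover. Reranking too frequently lets spillover dominate and collapses the bound to the greedy value $0.5$, whereas never reranking destroys the independence the window accounting needs; controlling this residual spillover loss is the step I expect to be the main obstacle.

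Finally I would reduce (ii) to a scalar optimization: the per-edge estimate $F(\beta)$ is the value of a balance condition (a functional/ODE inequality analogous to the derivation of $1-1/e$, now carrying the spillover-correction terms from the previous step), which I would then maximize jointly over $\beta$ and the period length. The claim is that the optimum is attained at $\beta=0.89$, giving $F=0.589$ and hence the stated competitive ratio.
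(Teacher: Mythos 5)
Your overall architecture matches the paper's: an LP relaxation with sliding-window reusability constraints, a Devanur--Jain--Kleinberg style dual fitting that splits each matched reward $r_i$ into an arrival part $r_i(1-g(y_i))$ and a resource part $r_i\,g(y_i)$, and a per-edge threshold argument integrated over the seed. You also correctly locate the new difficulty in the carry-over of unavailability from one period into the next. But you stop exactly there: ``controlling this residual spillover loss is the step I expect to be the main obstacle'' is an acknowledgement of the gap, not an argument, and the quantity $0.589$ is asserted rather than derived. Moreover, the specific conditioning you propose --- fixing the seeds of the other resources \emph{and} the ``blocking pattern $B_i$ recording when $i$ is rendered busy by earlier matches'' --- is precisely what the paper shows cannot work: the blocking pattern at $t$ is determined by $i$'s seed in period $k(t)-1$, and conditioned on a realization in which $i$ is still busy at $t$, the arrival-side bound $\lambda_t\geq r_i(1-g(y^c_t))$ is simply unavailable, so no nontrivial per-edge inequality holds. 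The paper states this explicitly and for that reason integrates \emph{jointly} over the two seeds $y^{k(t)-1}_i$ and $y^{k(t)}_i$.

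The missing content is the structure that makes that joint integration tractable. The paper proves a monotonicity lemma (lowering the previous-period seed can only move $i$'s match in period $k(t)-1$ earlier), which yields two thresholds $z_1\leq z_2$ partitioning the previous seed's range into three regimes: $i$ is never matched in period $k(t)-1$ (seed above $z_2$), $i$ is matched there and still busy at $t$ (seed in $(z_1,z_2)$), and $i$ is matched there but returns before $t$ (seed below $z_1$). In the middle regime the arrival-side term is rescued by comparing critical thresholds across runs ($y^c_t(y^1)\leq y^c_t(1)$, proved by an availability-set domination induction), and the resource-side term is rescued because the previous match's $\theta$-increment lands inside the window $[t,t(d)]$ and integrates to $G(z_2)-G(z_1)$. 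Only after assembling these three cases does one get a closed-form lower bound $f(z_1,z_2,x)$ whose minimum over $0\leq z_1\leq z_2\leq 1$, $x\in[0,1]$ is what is optimized over $\beta$ (the period length is not a free parameter --- it is exactly $d$). Without the monotonicity lemma, the three-threshold decomposition, and the spillover accounting for the middle regime, your plan does not yield any constant above $1/2$, let alone $0.589$.
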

	\textbf{Upper bound for OMR:} Recall that a $(1-1/e)$ upper bound for OMR follows from the fact that OMR reduces to OM for large $d$. 
	In fact, 
	this upper bound holds for all non-zero values of $d$. 
	To see this, observe that in OMR there can be an arbitrary number of arrivals in any given time duration. 
Therefore,	for any $d>0$, an entire instance(s) of OBM (including the hard ones) can fit into a duration smaller than $d$. More generally, this imples that OMR is equally hard for all finite and non-zero $d$ i.e., if there is an upper bound of $\alpha$ for duration $d'$, one can scale the hard instances to obtain an $\alpha$ upper bound for any other (finite and non-zero) duration $d$. 

	Next, we discuss previous work on this problem. This is followed by a discussion on the significance of the unit inventory setting.

	\subsection{Previous Work}\label{sec:prev}
	Settings with uncertain sequentially arriving demand and reusable resources has received significant interest recently. We start by discussing work that is closest to our setting. A generalization of OMR was first studied by \cite{reuse}. In their setting, 
	\begin{enumerate}[(i)]
		\item  Usage durations are stochastic i.e., when a resource $i\in I$ is allocated, it is used for an independently sampled random duration $d\sim F_i$.
		\item Instead of matching an arrival to a resource, we offer an assortment (i.e., set) of resources to each arrival, and they choose up to one resource from the assortment according to a stochastic choice model that is revealed on arrival.
	\end{enumerate}
	  They showed that the greedy algorithm which offers a revenue maximizing assortment to each arrival, is 0.5 competitive for this general setting. Subsequently,  \cite{feng, feng3} and \cite{full} 
	   considered this setting with the additional structural assumption of \emph{large starting inventory} for every resource i.e., large number of identical copies of each resource. Despite this assumption, $(1-1/e)$ is the best possible guarantee for the problem. \cite{feng, feng3} showed that a classic inventory balancing algorithm, originally proposed for non-reusable resources, 
	  is $(1-1/e)$ competitive for reusable resources with deterministic (but not necessarily identical) usage times. \cite{full} considered the general case of stochastic usage durations (same as \cite{reuse}), and demonstrated that classic approaches fail to improve on the performance of greedy in this more general setting. They 
	  proposed a novel algorithm that accounts for the stochastic nature of reusability by balancing ``effective" inventory in a fluid way, and 
	  achieves the best possible guarantee of $(1-1/e)$ for arbitrary usage distributions.  
	  We note that a parallel stream of work considers the setting of reusable resources with stochastic arrivals \citep{dickerson, RST18, baek, feng2}. For a detailed review of these settings, see \cite{reuse} and \cite{full}. 
	  
	  The setting of unit inventory (considered in this paper) captures the large inventory setting as a special case. When we have multiple copies of a resources, we may treat each copy as a distinct resource with unit inventory. Thus, an algorithm designed for the unit inventory case can be generalized to settings with arbitrary (known) inventory without affecting its performance guarantee (see \cite{reuse} for a formal proof). Prior to this work, the 0.5 guarantee of greedy was the best known result for OMR.

	  For non-reusable resources, there is a wealth of work that improves on greedy and achieves the best possible guarantee of $(1-1/e)$. 
	 Recall, \cite{kvv} introduced the OM problem and showed (among other results) that the Ranking algorithm which randomly ranking resources at the start and then matches every arrival to the best ranked unmatched resource, is $(1-1/e)$ competitive for OM and that this is the best possible guarantee achievable for the problem. 
	 The analysis of Ranking was clarified and considerably simplified by \cite{baum} and \cite{goel2}. In the more general setting where resources have arbitrary rewards $r_i$, \cite{goel} proposed the Perturbed Greedy algorithm and showed that is $(1-1/e)$ competitive for this generalization of OM.  
	 In the large inventory setting, \cite{pruhs} considered the problem of online $b-$matching where the budget of every resource can be more than 1 and showed that as $b\to \infty$, the natural (deterministic) algorithm that balances the budget used across resources is $(1-1/e)$ competitive. 
	 Also in the large inventory setting, \cite{msvv} introduced the Adwords problem which generalizes the OM setting by allowing multi-unit demand. They 
	 gave an online algorithm with guarantee of $(1-1/e)$ for Adwords. 
	 \cite{buchbind} gave a primal-dual analysis for the result of \cite{msvv}. Subsequently, \cite{devanur} proposed the randomized primal-dual framework that can be used to show the aforementioned results in a unified way.
 In addition to these settings, there is a vast body of work on online matching and (non-reusable) resource allocation in stochastic and hybrid/mixed models of arrival. For a comprehensive review of these works, 
 see \cite{survey}. 
 
 		  Finally, in related work, \cite{moharir2015online} introduced a unit inventory model where  
 the arrival sequence is divided into slots and resources are reusable with a (deterministic) usage duration of one slot. At the beginning of each slot, arrivals are sequentially revealed within a short (infinitesimal) amount of time. The number and types of arrivals in a slot is arbitrary. 
A resource can be matched to at most one arrival in each slot and 
 a resource matched in slot $t$ is available for rematch in slot $t+1$.
 Consequently, 
 the decision across slots are independent 
and it can be shown that the classic Ranking algorithm \citep{kvv}, is $(1-1/e)$ competitive. \cite{moharir2015online} proposed several new algorithms, including a $(1-1/e)$ competitive reranking algorithm that samples a new rank for the resources at the beginning of each slot. In fact, they show this result for the more general case where arrivals have heterogeneous match deadlines. Overall, their setting and results are incomparable to ours.
	 	\subsection{Significance of the Unit Inventory Setting}
	 	Reusability of resources is an undeniably important aspect of online platforms such as Airbnb, Upwork, Thumbtack. 
	 	In these settings, each resource is unique and there is often just one unit of inventory per resource. For example, on a platform such as Airbnb, where every listing is a reusable resource, 
	 	a typical listing may be occupied by at most one customer at a time. 
	 	A similar situation arises in case of boutique hotels \citep{sumida}. On platforms such as Upwork and Thumbtack \citep{feng3}, each free-lancing agent can be modeled as a distinct reusable resource that can perform at most one task at a time. 
	 	Settings with small (not necessarily unit) inventory also arise in applications where procuring new inventory is expensive, sales are slow moving, and resources can be reused many times before expiry. For instance, \cite{besbes} consider a setting where resources are rotable spare parts for aircrafts and the starting inventory for most parts is under 10 units (see Figure 9 in \cite{besbes}).  
	 	
	 	In contrast, the large inventory assumption is natural for applications such as cloud computing, where each machine is a resource and the capacity of a machine is the number of jobs it can handle in parallel \citep{full}. 
	 	Another instance where the large inventory assumption is appropriate is make-to-order settings \citep{reuse}, where each production line or machine is a resource and the capacity is measured in the number of units of a good that the machine can manufacture in a given time period.

	
	 
	\section{The Periodic Reranking Algorithm}

\begin{algorithm}[H]
	\SetAlgoNoLine
	\textbf{Inputs:} Set of resources $I$, usage duration $d$, parameter $\beta$\; 
	Let $g(t)=e^{\beta(t-1)}$ and $S=I$\;
	\smallskip
\textbf{Every $d$ time units:} Generate new i.i.d. ranks $y_i\sim U[0,1]\,\, \forall i\in I$\;
\smallskip
	\For{\text{every new arrival } $t$}{
		Update set $S$ by adding resources that returned since arrival $t-1$\;
		Match $t$ to $i^*=\underset{ i\in S,\, (i,t)\in E}{\arg\max}\quad r_i (1-g(y_i))$\;
		$S=S\backslash \{i\}$\;
}	
	\caption{Periodic Reranking (PR)}
	\label{pr}
\end{algorithm}

At the start of the planning horizon, the PR algorithm (independently) samples a random seed $y_i\in U[0,1]$, for every $i\in I$. Using this seed, and a monotonically increasing trade-off function $g:\mathbb{R}\to [0,1]$, the algorithm evaluates \emph{reduced prices} $r_i(1-g(y_i))\,\,\, \forall i\in I$. Observe that the reduced prices change over time. In particular, after every $d$ units of time, \alg\ samples 
 new seeds for the resources. 
Re-sampling over periods of length $d$ ensures that resources have a new seed every time they return back to the system after a match. Given the reduced prices, \alg\ matches each arrival to an available neighbor with the highest reduced price at the moment of arrival. The name Periodic Reranking comes from the following observation. When rewards $r_i=r_j\,\, \forall i,j\in I$, due to the monotonicity of $g$, the algorithm is equivalent to reranking resources after every $d$ units of time and matching arrivals to the best ranked available neighbor.


When resources are non-reusable, say $d=a(T)$, the \alg\ algorithm reduces to the Perturbed Greedy (PG) algorithm. 
For the PG algorithm, \cite{goel} showed that choosing $g(x)=e^{x-1}$ leads to the best possible guarantee of $(1-1/e)$ for OM with arbitrary rewards. In \alg, we consider the family of functions $g(x)=e^{\beta(x-1)}$ parameterized by $\beta>0$. Our analysis dictates the choice of $\beta$. In particular, $\beta =0.89$ optimizes the guarantee that can be achieved with our analysis.

In Section \ref{sec:gen}, we present a natural generalization of \alg\ for stochastic usage. Showing a performance guarantee for the generalized algorithm remains open.  
\subsection{Intuition Behind Reranking and Periodicity} 
 To gain insight into the usefulness of reranking and periodicity, consider the classic Ranking algorithm on the example below. 
\begin{eg}\label{ranker}
	\emph{	Consider a setting with two reusable resources $\{1,2\}$, identical rewards, usage duration $d>0$ and four arrivals. The first and third arrivals have edges to both resources. The second arrival only has an edge to resource 2. The fourth arrival only has an edge to resource 1. The first two arrivals occur in close proximity to each other (less than $d$ time apart). The second and third arrivals are well separated in time (more than $d$ units apart). Finally, the last two arrivals also occur close to each other (similar to the first two). 
Observe that the matching decisions at arrivals one and two have no impact on the availability of resources at arrival three. The Ranking algorithm will randomly rank the two resources. Since ranks are not changed, arrivals one and three are always matched to the same resource. Therefore, arrival two is matched if and only if arrival four is not matched. In contrast, the optimal match is obtained by ranking resource 1 over resource 2 for the first two arrivals and then \emph{reversing the ranks} for the remaining two arrivals. 
} \end{eg}  

The example does not give an upper bound on the overall performance of Ranking but illustrates a key difficulty in analyzing the performance of Ranking for reusable resources. In general, when the ranking is fixed, 
 ``right" matching decisions on early arrivals (matching the first arrival to resource 1), may imply 
 ``wrong" decisions on later arrivals (matching the third arrival to resource 1).  Reranking provides a natural way to mitigate this analytical issue. In fact, by reranking every $d$ time units, \alg\ also operationalizes the insight that a decision to match a resource at time $\tau$ does not affect the resource availability after time $\tau+d$. 
 This untangles the 
 the dependence between matching decisions for arrivals that are well separated across time and makes \alg\ tractable to analyze. 
One could also consider an extreme version of reranking, as described below. 
\smallskip

\noindent \emph{Frequent reranking}: Consider the algorithm that reranks resources at every arrival. When vertex weights are identical, say $r_i=1\,\, \forall i\in I$, this algorithm is equivalent to the following randomized algorithm: Match every arrival (that can be matched) by sampling a resource uniformly randomly. This algorithm, called Random, is known to have worst case performance same as greedy even for non-reusable resources \citep{kvv}. 
\smallskip

The key insight is that in the time span of one usage duration, each resource can be matched at most once, presenting a scenario similar to non-reusable resources. Within a period, \alg\ maintains the same rank and avoids the pitfall of frequent reranking. Indeed, \alg\ reduces to Ranking when $d\geq a(T)$. 
Finally, we discuss an alluring alternative to \alg\ that 
 reranks a resource every time it is \emph{reused}. %
\smallskip


\noindent \emph{Reranking on Return (RoR):} 
Rerank a resource every time it returns back to the system after a match. We call this the RoR algorithm. 
\smallskip

Notice that \alg\ generates a new rank more frequently than RoR. In RoR, if a resource is not highly ranked then it may not be matched and its rank is not reset. Consequently, RoR does not fully succeed in untangling dependence between matching decisions at arrivals that are well separated. Similar to Ranking, analyzing the performance of RoR remains a challenging open problem. 
\section{Analysis of Periodic Reranking}

	Our analysis relies on the primal-dual framework of \cite{devanur}, which is a versatile and general technique for proving guarantees for online matching and related problems. To describe the framework, consider the following primal problem (adapted from \cite{dickerson}), that upper bounds the optimal offline solution for OMR. 
	\begin{eqnarray}
		\textbf{Primal:}\qquad &\min\quad  &\sum_{(i,t)\in E}r_i\, x_{it} \nonumber\\
		&s.t.\ & \sum_{\tau\leq t\,\mid\, (i,\tau)\in E}\onee(a(t)-a(\tau)\leq d)\, x_{i\tau} \leq\, 1\quad \forall i\in I, t\in T\nonumber \\
		&& \sum_{i\in I\, \mid\, (i,t)\in E}x_{it} \leq 1\quad \forall t\in T\nonumber \\
		&& x_{it}\geq 0 \quad \forall (i,t)\in E\nonumber	
	\end{eqnarray} 
\noindent \textbf{Dual certificate:} 
For $t\in T$, let $t(d)$ denote the last arrival in the time interval $(a(t),a(t)+d]$. Let \alg\ denote both the algorithm and its expected total reward. Now, suppose there exist non-negative values $\lambda_t,\theta_{it}$ such that,
\begin{enumerate}[(i)]
		\item $ \alg \geq \sum_{t\in T}\lambda_t + \sum_{i\in I,\, t\in T} \theta_{it}.$ 
	\item$\lambda_t+\sum_{\tau=t}^{t(d)} 
	\theta_{i\tau}\geq\,\alpha\, r_i,\quad \forall (i,t)\in E,$
\end{enumerate}
Then, by weak LP duality, we have that PR is $\alpha$ competitive. 

Recall that the PR algorithm works in fixed periods of length $d$. Let $K=\ceil{a(T)/d}$ denote the total number of periods and let $k(t)$ denote the period that contains arrival $t\in T$. 
To ensure that $k(t)-1$ is well defined for every $t\in T$, we add a dummy period (time interval $d$) prior to the first arrival. This period does not have any arrivals and simply ensures that $k(t)\geq 2$ for every arrival. Let $y^k_i$ denote the $k$-th seed of resource $i$. Note that $y^k_i$ is the seed of $i$ in period $k\in[K]$. 
Let $Y$ denote the vector of all random seeds. Given a resource $i\in I$ and arrival $t\in T$, let $Y_{-it}$ denote the vector of all seeds except $y^{k(t)}_i$ and $y^{k(t)-1}_i$. In other words, $Y_{-it}$ captures all seeds except the seed of $i$ at $t$ and the period prior to $t$. We use $E_{y^{k(t)}_i,\, y^{k(t)-1}_i}[\cdot]$ to denote expectation with respect to the randomness in seeds $y^{k(t)}_i$ and $y^{k(t)-1}_i$. 

In order to define our dual candidate, we first define random variables $\lambda_t(Y)$, $\theta_{it}(Y)$ and subsequently set $\lambda_t=E_{Y}[\lambda_t(Y)]$ and $\theta_{it}=E_{Y}[\theta_{it}(Y)]$. Recall that $t(d)$ denotes the last arrival in the time interval $(a(t),a(t)+d]$, Inspired by \cite{devanur}, we set $\lambda_t(Y)$ and $\theta_i(Y)$ as follows. 
\smallskip

\noindent \textbf{Dual fitting:} Initialize all dual variables to 0. Conditioned on $Y$, for each match $(i,t)$ in PR set,  
 	\begin{equation}
	\lambda_t(Y)=r_i\left(1-g\left(y^{k(t)}_i\right)\right),\label{lambda}
\end{equation}
and increment $\theta_{it(d)}(Y)$ as follows,
 	\begin{equation}
\theta_{it(d)}(Y)=\theta_{it(d)}(Y)+ r_i\, g\left(y^{k(t)}_i\right). \label{theta}
\end{equation}

\begin{lemma}
	The dual candidate given by \eqref{lambda} and \eqref{theta} satisfies constraint (i) of the dual certificate.
	\end{lemma}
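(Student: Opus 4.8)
The plan is to establish constraint (i) as an exact identity that holds pointwise for every realization of the seed vector $Y$, and then to recover the stated inequality by taking expectations. The guiding observation is that the dual fitting procedure merely redistributes the reward collected by \alg\ across the dual variables, without creating or destroying any value, so constraint (i) should in fact hold with equality.

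First I would fix a realization of $Y$, let $M(Y)$ denote the deterministic matching produced by \alg\ under that realization, and write its reward as $\text{PR}(Y)=\sum_{(i,t)\in M(Y)} r_i$, since each match $(i,t)$ contributes exactly $r_i$. I would then examine how the dual fitting rules \eqref{lambda} and \eqref{theta} process this same match: they set $\lambda_t(Y)=r_i\bigl(1-g(y^{k(t)}_i)\bigr)$ and increment $\theta_{it(d)}(Y)$ by $r_i\, g(y^{k(t)}_i)$. These two contributions sum to exactly $r_i$, independently of the seed value, so each match contributes its full reward to the aggregate dual objective $\sum_{t}\lambda_t(Y)+\sum_{i,t}\theta_{it}(Y)$.

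The one point requiring care — and the main (though mild) obstacle within this lemma — is confirming that the per-match assignments are well defined and never collide, so that the contributions genuinely total $\sum_{(i,t)\in M(Y)} r_i$. For $\lambda_t(Y)$ this is immediate: each arrival $t$ is matched to at most one resource, so the assignment \eqref{lambda} fires at most once per $t$. For $\theta_{it(d)}(Y)$ I would argue that distinct matches of the same resource $i$ increment distinct variables: once $i$ is matched at $t$ it is unavailable throughout $(a(t),a(t)+d]$, so its next match occurs at some $t'$ with $a(t')>a(t)+d$; consequently the intervals $(a(t),a(t)+d]$ and $(a(t'),a(t')+d]$ are disjoint, and their last arrivals $t(d)$ and $t'(d)$ differ. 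Hence no two matches write to the same $\theta$ variable, and matches of different resources land in different indices since $\theta$ is indexed by $i$. This yields the pointwise identity $\text{PR}(Y)=\sum_{t}\lambda_t(Y)+\sum_{i,t}\theta_{it}(Y)$.

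Finally I would take expectations over $Y$. By linearity of expectation together with the definitions $\lambda_t=E_Y[\lambda_t(Y)]$, $\theta_{it}=E_Y[\theta_{it}(Y)]$, and $\alg=E_Y[\text{PR}(Y)]$, the pointwise identity integrates to $\alg=\sum_{t}\lambda_t+\sum_{i,t}\theta_{it}$, which in particular establishes constraint (i). I expect the substantive analytical difficulty of the paper to lie not here but in verifying constraint (ii), where one must control the expected dual values against $\alpha\, r_i$ and exploit the periodic reranking structure; the present statement is essentially a conservation-of-reward accounting identity.
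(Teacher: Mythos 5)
Your proof is correct and follows essentially the same route as the paper: both establish the pointwise identity $\sum_t\lambda_t(Y)+\sum_{i,t}\theta_{it}(Y)=\alg(Y)$ for each fixed seed vector $Y$ by noting that each match $(i,t)$ contributes exactly $r_i\bigl(1-g(y^{k(t)}_i)\bigr)+r_i\,g(y^{k(t)}_i)=r_i$, and then take expectations. (Your disjointness argument for the $\theta$ indices is fine but not strictly needed, since \eqref{theta} \emph{increments} rather than overwrites, so even coinciding indices would accumulate correctly --- the paper handles this by observing that the sets $S_{i\tau}$ partition the matched arrivals.)
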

\begin{proof}{Proof.}
Let $\alg(Y)$ denote the matching output by PR given seed vector $Y$. From \eqref{theta}, we have 
\[\theta_{i\tau}(Y) = r_i\sum_{t\, \mid\, t(d)=\tau,\, (i,t)\in \alg(Y)}g\left(y^{k(t)}_i\right).\] 
Observe that the sets $S_{i\tau}=\{t\, \mid\, t(d)=\tau,\, (i,t)\in \alg(Y)\}\,\, \forall i\in I,\, \tau\in T$, partition the set of arrivals that are matched in $\alg(Y)$. Overloading notation, let $\alg(Y)$ also denote the total revenue of the matching $\alg(Y)$. Then,
\[\sum_{t\in T} \lambda_t(Y) + \sum_{i\in I,\, t\in T} \theta_{it}(Y) = \sum_{(i,t)\in \alg(Y)}r_i=\alg(Y). \]
\hfill\Halmos
\end{proof}

It remains to show that constraints $(ii)$ of the dual certificate hold for the desired value of $\alpha$. 
For OM, \cite{devanur} prove a stronger statement in terms of conditional expectations. We will follow a similar strategy.

\begin{lemma}
	Consider an edge $(i,t)\in E$ and seed $Y_{-it}$. Suppose that for the candidate solution given by \eqref{lambda} and \eqref{theta}, we have 
\begin{equation}\label{dual2c}
	E_{y^{k(t)}_i,\, y^{k(t)-1}_i}\left[\lambda_t(Y) + \sum_{\tau=t}^{t(d)} 
	\theta_{i\tau}(Y) \,\big|\, Y_{-it}\right]\geq \alpha\, r_i,
\end{equation}
for some value $\alpha>0$. Then, constraint (ii) of the dual certificate is satisfied for edge $(i,t)\in E$  with the same $\alpha$. 
\end{lemma}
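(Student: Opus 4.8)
The plan is to recognize this as a direct application of the law of iterated expectations, which lifts the conditional bound \eqref{dual2c} to the unconditional bound demanded by constraint (ii). First I would recall that the dual variables are defined as the full expectations $\lambda_t = E_Y[\lambda_t(Y)]$ and $\theta_{it} = E_Y[\theta_{it}(Y)]$, so by linearity of expectation the left-hand side of constraint (ii) can be written as
\[
\lambda_t + \sum_{\tau=t}^{t(d)} \theta_{i\tau} = E_Y\!\left[\lambda_t(Y) + \sum_{\tau=t}^{t(d)} \theta_{i\tau}(Y)\right],
\]
and the goal reduces to showing this quantity is at least $\alpha\, r_i$.

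The key step is to decompose the full seed vector as $Y = (Y_{-it},\, y_i^{k(t)},\, y_i^{k(t)-1})$, which is valid precisely because $Y_{-it}$ was defined to be the collection of all seeds other than the period-$k(t)$ and period-$(k(t)-1)$ seeds of $i$; the dummy period guarantees $k(t)\geq 2$, so $y_i^{k(t)-1}$ is well defined. Since all seeds are sampled i.i.d., the pair $(y_i^{k(t)}, y_i^{k(t)-1})$ is independent of $Y_{-it}$, so the tower property applies and gives
\[
E_Y\!\left[\lambda_t(Y) + \sum_{\tau=t}^{t(d)} \theta_{i\tau}(Y)\right] = E_{Y_{-it}}\!\left[\, E_{y_i^{k(t)},\, y_i^{k(t)-1}}\!\left[\lambda_t(Y) + \sum_{\tau=t}^{t(d)} \theta_{i\tau}(Y)\,\Big|\, Y_{-it}\right]\right].
\]

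Finally I would invoke the hypothesis \eqref{dual2c}, which bounds the inner conditional expectation by $\alpha\, r_i$ for every realization of $Y_{-it}$. Monotonicity of the outer expectation then yields $E_{Y_{-it}}[\alpha\, r_i] = \alpha\, r_i$, establishing constraint (ii) for the edge $(i,t)$ with the same $\alpha$. I do not anticipate any genuine obstacle in this argument: all the analytic content is deferred to the forthcoming verification of \eqref{dual2c} itself, and the present lemma merely records that a per-realization (conditional) guarantee suffices. The only points requiring care are bookkeeping — confirming that the seed vector decomposes cleanly into $Y_{-it}$ together with the two distinguished seeds, and that independence justifies the conditioning — both of which follow immediately from the i.i.d.\ sampling of the ranks.
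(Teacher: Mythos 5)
Your argument is correct and is essentially the paper's own proof, which simply states that the lemma follows by taking expectation over $Y_{-it}$ on both sides of \eqref{dual2c}. You have merely made explicit the bookkeeping (linearity of expectation, the decomposition of $Y$, and the tower property) that the paper leaves implicit.
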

\begin{proof} {Proof.}
	The lemma follows by taking expectation over $Y_{-it}$ on both sides of \eqref{dual2c}. 
	
	\hfill\Halmos
\end{proof}

For a given value of seed $y^{k(t)-1}_i$, resource $i$ may be matched to an arrival in period $k(t)-1$ such that it is unavailable at $t$ for all values of seed $y^{k(t)}_i$.  
Therefore, a stronger version of \eqref{dual2c} where we also fix $y_i^{k(t)-1}$ and consider a conditional expectation only w.r.t.\ random seed $y^{k(t)}_i$, does not hold for any non-trivial value of $\alpha$. This necessitates an analysis where consider an expectation w.r.t.\ both $y^{k(t)-1}_i$ and $y^{k(t)}_i$. 
In a hypothetical scenario where $i$ is available at $t$ for all values of $y^{k(t)-1}_i$, it can be shown that inequality \eqref{dual2c} holds for $\alpha=(1-1/e)$ (when $\beta=1$). Of course, in reality, $i$ may not be available at $t$ for some values of $y^{k(t)-1}_i$ 
and this scenario leads to a lower value of $\alpha$ (= 0.589) in our analysis.  

We now prove \eqref{dual2c} for every edge $(i,t)\in E$ and seed $Y_{-it}$. To this end, fix an arbitrary edge $(i,t)$ and seed $Y_{-it}$. 
To simplify notation, let $y^1=y^{k(t)-1}_i$ and $y^2=y^{k(t)}_i$. Further, let $\alg(y^1, y^2)$ denote the matching output by \alg\ given seeds $y^1,y^2$ for $i$ and with other seeds fixed according to $Y_{-it}$. Since $Y_{-it}$ is fixed, 
for simplicity,
let $\lambda_t(y^1,y^2)$ denote $\lambda_t(y^1,y^2,Y_{-it})$. Similarly, let $\theta_{i\tau}(y^1,y^2)$ denote $\theta_{i\tau}(y^1,y^2,Y_{-it})$. We also write the conditional expectation $E_{y^1,y^2}[\cdot \mid Y_{-it}]$ as $E_{y^1,y^2}[\cdot]$. 

Let $S_\tau(y^1,y^2)$ denote the set of resources available at arrival $\tau\in T$ 
in $\alg(y^1,y^2)$. 
Given $y^1\in[0,1]$, for every arrival $\tau \in T$, 
define the critical threshold $y^c_\tau(y^1)$ as the solution to,
\[r_i\left(1-g\left(y^c_\tau(y^1)\right)\right) = \max_{j\in S_\tau(y^1,1),\, (j,\tau)\in E}r_j\left(1-g\left(y^{k(\tau)}_j\right)\right). 	\]
Due to the monotonicity of function $g$, there is at most one solution to this equation. If there is no solution, we let $y^c_\tau(y^1)=0$. At a high level, the critical threshold at arrival $\tau$ captures the highest reduced price at the arrival when resource $i$ is ``removed" in period $k(t)$ (achieved by setting $y^2=1$). Similar to the analysis of OM (\cite{devanur}), this scenario serves as a foundation for establishing lower bounds on $E_{y^{1},\, y^{2}}\left[\lambda_t(y^1,y^2)\right]$ and $E_{y^{1},\, y^{2}}\left[ \sum_{\tau=t}^{t(d)} 
\theta_{i\tau}(y^1,y^2)\right]$. 

Recall that $k(t)$ is the period that contains $t$. 
Let $p(t)=[0,a(t)]\cap k(t)$ denote the sub-interval of $k(t)$ that includes all arrivals prior to (and including) $t$. We let $S_{p(t)}(y^1,1)=\cup_{\tau \in p(t)} S_\tau(y^1,1)$ i.e., $S_{p(t)}(y^1,1)$ denotes the set of all resources that are available at some point of time in interval $p(t)$. The next lemma gives useful lower bounds on 
$\lambda_t(y^1,y^2)$ and $\sum_{\tau=t}^{t(d)} \theta_{i\tau}(y^1,y^2)$, when resource $i$ is available at some point in the interval $p(t)$.

\begin{lemma}\label{devlb}
	Given $y^1\in[0,1]$ such that $i\in S_{p(t)}(y^1,1)$, we have, 
	\begin{enumerate}[(a)]
		\item $\lambda_t(y^1, y^2)\,\geq\,\lambda_t(y^1,1)\, \geq\, r_i \left(1-g(y^c_t(y^1))\right)\quad \forall y^2\in[0,1]$.
		\item $\sum_{\tau=t}^{t(d)} 
		\theta_{i\tau}(y^1,y^2) \, \geq\, \onee(y^2<y^c_t(y^1))\, r_i g(y^2)\quad \forall y^2\in[0,1].$
	\end{enumerate}
	\end{lemma}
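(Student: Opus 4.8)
The plan is to reduce both parts to a single within-period monotonicity statement and then read off (a) and (b) from the definition of the critical threshold. The key observation is that the seed $y^2 = y^{k(t)}_i$ influences the behavior of \alg\ only during period $k(t)$: the availability of every resource at the start of period $k(t)$---including whether $i$ is still blocked by a match made in period $k(t)-1$---is determined entirely by $y^1$ and $Y_{-it}$, and is therefore identical in the two runs $\alg(y^1,y^2)$ and $\alg(y^1,1)$. Moreover, since each resource can be matched at most once within a window of length $d$, the restriction of \alg\ to the arrivals of period $k(t)$ is exactly the (non-reusable) Perturbed Greedy process run on the available resources with their period-$k(t)$ seeds, where $i$'s availability window inside the period is the fixed quantity pinned down by $y^1$. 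The hypothesis $i \in S_{p(t)}(y^1,1)$ is precisely the statement that this window is nonempty on $p(t)$, i.e.\ that $i$ has returned in time to compete at some arrival up to and including $t$.

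First I would establish the standard Ranking monotonicity lemma in this within-period form: comparing $\alg(y^1,y^2)$ (where $i$ is present with seed $y^2$) against $\alg(y^1,1)$ (where $i$ is effectively removed, since $g(1)=1$ makes its reduced price $0$), every arrival of period $k(t)$ is matched in the former to a resource of reduced price at least as high as in the latter. I would prove this by the usual induction over the arrivals of period $k(t)$: if $i$ is never selected, the two runs agree on all other resources and the claim is immediate; at the first arrival where $i$ is selected in $\alg(y^1,y^2)$, the displaced resource cascades forward, matching each subsequent arrival to a reduced price no worse than in $\alg(y^1,1)$. The one point requiring care is availability: because $i$'s blocking from period $k(t)-1$ is identical in both runs and $i$ is not re-blocked until it is matched, $i\in S_{p(t)}(y^1,1)$ guarantees that $i$ is genuinely available---in both runs---at every arrival of $p(t)$ after it returns, so the comparison is with a resource that could actually have been chosen.

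Given the monotonicity lemma, part (a) is immediate: its first inequality $\lambda_t(y^1,y^2)\geq \lambda_t(y^1,1)$ is the lemma applied to arrival $t$, and its second inequality holds because $\lambda_t(y^1,1)$ equals the largest reduced price among the available neighbors of $t$ in the removed run, which is exactly $r_i(1-g(y^c_t(y^1)))$ by the defining equation of $y^c_t(y^1)$ (and at least that value in the degenerate case $y^c_t(y^1)=0$). For part (b), suppose $y^2 < y^c_t(y^1)$; I would argue $i$ is matched at some arrival $t'\leq t$ of period $k(t)$. Indeed, if $i$ is not matched before $t$, then the two runs agree up to $t$, so at $t$ resource $i$ is available and faces competition of reduced price $r_i(1-g(y^c_t(y^1)))$; since $y^2 < y^c_t(y^1)$ and $g$ is increasing, $i$'s reduced price $r_i(1-g(y^2))$ strictly exceeds this, so \alg\ matches $i$ to $t$. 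Either way $i$ is matched once in period $k(t)$ at some $t'\leq t$, and this match increments $\theta_{it'(d)}$ by $r_i g(y^{k(t')}_i)=r_i g(y^2)$. Because $t'$ and $t$ lie in the same period of length $d$, we have $t\leq t'(d)\leq t(d)$, so this term appears in $\sum_{\tau=t}^{t(d)}\theta_{i\tau}(y^1,y^2)$, and since every $\theta$ term is nonnegative the bound in (b) follows; when $y^2\geq y^c_t(y^1)$ the indicator vanishes and (b) is trivial.

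The main obstacle I anticipate is not the monotonicity induction itself---which is classical---but the bookkeeping that justifies treating period $k(t)$ as a self-contained Perturbed Greedy instance. Concretely, the delicate step is verifying that $i$'s availability inside the period is a function of $y^1$ alone and that $i\in S_{p(t)}(y^1,1)$ exactly captures ``$i$ has returned in time to be a genuine competitor at $t$'', so that the competition defining $y^c_t(y^1)$ in the removed run is the same competition $i$ actually faces at $t$ in the present run. This is the point where reusability enters, and it is why the statement must be conditioned on the event $i\in S_{p(t)}(y^1,1)$ and why the later analysis integrates over $y^1$ separately.
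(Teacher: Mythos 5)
Your proposal is correct and follows essentially the same route as the paper: part (a) via the standard availability-nesting induction over the arrivals of period $k(t)$ (treating $y^2=1$ as removing $i$, since $g(1)=1$), and part (b) by showing that $y^2<y^c_t(y^1)$ forces $i$ to be matched at some $t'\leq t$ within the period, whose $\theta_{it'(d)}$ increment of $r_i\,g(y^2)$ falls inside the window $[t,t(d)]$. The paper states part (b) in the contrapositive (if $i$ is unmatched in the period up to $t$ then $y^2>y^c_t(y^1)$), but the argument is the same.
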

Since every resource is matched at most once within each period, the bounds in Lemma \ref{devlb} are quite similar to their counterparts in the classic OM setting where resources are matched at most once \citep{devanur}. For a proof, see Appendix \ref{appx:missing}. 
Next, we show a useful lower bound on 
$\sum_{\tau=t}^{t(d)} \theta_{i\tau}(y^1,y^2)$ when $i\not\in S_{p(t)}(y^1,1)$. 

\begin{lemma}\label{unavail}
	Given $y^1\in[0,1]$ such that $i\not\in S_{p(t)}(y^1,1)$, we have, 
 $\sum_{\tau=t}^{t(d)} \theta_{i\tau}(y^1,y^2)\geq  r_i g(y^1)\,\, \forall y^2\in[0,1]$.
\end{lemma}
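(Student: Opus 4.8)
The plan is to exhibit the required mass $r_i g(y^1)$ as a single $\theta$-deposit landing at an index inside the window $[t,t(d)]$, and to argue that this deposit is present for every $y^2$. The governing observation is that the seed $y^2=y^{k(t)}_i$ influences only the matching decisions made during period $k(t)$: every match in periods $1,\dots,k(t)-1$ is completely determined by $y^1$ together with the fixed seeds $Y_{-it}$, hence independent of $y^2$. So it suffices to locate a match of $i$ in period $k(t)-1$ and to track where its increment in \eqref{theta} is placed, knowing in advance that this placement will not move as $y^2$ varies.

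Next I would show that the hypothesis $i\notin S_{p(t)}(y^1,1)$ forces exactly such a match. Setting $y^2=1$ gives $i$ the reduced price $r_i(1-g(1))=0$ throughout period $k(t)$, so $i$ is never matched during $k(t)$ in $\alg(y^1,1)$; its unavailability at every arrival of $p(t)$ must therefore be inherited from an earlier match. Here the exact alignment of the reranking period with the usage duration $d$ does the work: a match occurring in period $k(t)-2$ or earlier frees $i$ strictly before period $k(t)$ begins and cannot explain the unavailability, and the same alignment shows $i$ is matched at most once per period. Consequently there is a unique arrival $t'$ in period $k(t)-1$ at which $i$ is matched, and since $k(t')=k(t)-1$ its increment in \eqref{theta} equals $r_i g(y^{k(t)-1}_i)=r_i g(y^1)$, deposited at $\theta_{i,t'(d)}$. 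Because this match lives in period $k(t)-1$, both the match and its increment persist in $\alg(y^1,y^2)$ for every $y^2\in[0,1]$.

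The key remaining step is purely temporal: establishing $t\leq t'(d)\leq t(d)$. Since $i$ is unavailable at $t$, the usage interval of the match at $t'$ contains $a(t)$, i.e.\ $a(t')<a(t)\leq a(t')+d$. The left inequality places $t$ inside $(a(t'),a(t')+d]$, so its last arrival $t'(d)$ satisfies $t'(d)\geq t$; the right inequality gives $a(t'(d))\leq a(t')+d<a(t)+d$, which (modulo the boundary case of coincident arrival times, resolved by the standard convention) places $t'(d)$ inside $(a(t),a(t)+d]$ and hence $t'(d)\leq t(d)$. Thus $t'(d)$ indexes a genuine term of $\sum_{\tau=t}^{t(d)}\theta_{i\tau}$, and by non-negativity of every $\theta_{i\tau}(y^1,y^2)$ we conclude $\sum_{\tau=t}^{t(d)}\theta_{i\tau}(y^1,y^2)\geq \theta_{i,t'(d)}(y^1,y^2)\geq r_i g(y^1)$ for all $y^2$.

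I expect the main obstacle to be the localization in the second step, namely rigorously ruling out that the busy status of $i$ at $t$ is inherited from a match two or more periods earlier, and confirming the at-most-once-per-period property. Both hinge on the precise matching of the reranking period length to the usage duration $d$, which is exactly the structural feature the lemma is designed to exploit; once the correct match $t'$ is pinned down, the inclusion $t'(d)\in[t,t(d)]$ reduces to the short interval comparison above.
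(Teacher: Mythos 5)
Your proposal is correct and follows essentially the same argument as the paper: the hypothesis $i\notin S_{p(t)}(y^1,1)$ forces a match $(i,t')$ in period $k(t)-1$ with $a(t')<a(t)\leq a(t')+d$, whose increment $r_i\,g(y^1)$ is deposited at $\theta_{it'(d)}$ with $t\leq t'(d)\leq t(d)$, independently of $y^2$. You spell out a few steps the paper leaves implicit (ruling out matches from periods before $k(t)-1$, the bound $t'(d)\leq t(d)$, and the $y^2$-independence of the deposit), but the approach is the same.
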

\begin{proof}{Proof.}
	Given $i\not\in S_{p(t)}(y^1,1)$, we have that $i$ 
	is matched in period $k(t)-1$ to an arrival $t'$ such that $a(t)-a(t')<d$. Let $t'(d)$ denote the last arrival in the interval $(a(t'),a(t')+d]$. Since $a(t)<a(t')+d$, we have $t'(d)\geq t$. From \eqref{theta}, given match $(i,t')$, we increase the value of $\theta_{it'(d)}(y^1, y^2)$ by $r_i\, g(y^1)$. Thus, 
	\[\sum_{\tau=t}^{t(d)} \theta_{i\tau}(y^1,y^2)\, \geq\, \theta_{it'(d)}(y^1, y^2) \, \geq\, r_i\, g(y^1)\quad \forall y^2\in[0,1].\]
	\hfill\Halmos
\end{proof}

Notice that Lemma \ref{devlb} and Lemma \ref{unavail} apply to mutually exclusive and exhaustive scenarios. Lemma \ref{devlb} applies to the case where $i$ returns in period $k(t)$ at some time prior to arrival of $t$ ($i\in S_{p(t)}(y^1,1)$). On the other hand, Lemma \ref{unavail} applies to the case where $i$ is in use during the initial part of period $k(t)$ until at least time $a(t)$ ($i\not\in S_{p(t)}(y^1,1)$). For every $y^1$, we are in one of the two scenarios. Combining Lemma \ref{devlb}$(b)$ with Lemma \ref{unavail} gives a lower bound on $\sum_{\tau=t}^{t(d)} \theta_{i\tau}(y^1,y^2)$ for all $(y^1,y^2)\in[0,1]^2$. In the proof of Lemma \ref{combine}, we turn this into a desired lower bound on the expectation $E_{y^1,y^2}\left[\sum_{\tau=t}^{t(d)} \theta_{i\tau}(y^1,y^2)\right]$. 

In the scenario where $i\in S_{p(t)}(y^1,1)$, Lemma \ref{devlb}$(a)$ lower bounds $\lambda_t(y^1,y^2)$ as a function of the critical threshold $y^c_t(y^1)$. It remains to lower bound $\lambda_t(y^1, y^2)$ when $i\not \in S_{p(t)}(y^1,1)$ and find convenient bounds on $y^t_c(y^1)$. To this end, Lemma \ref{y1} first gives a sharp characterization of the set of values of $y^1$ that lead to each scenario. Lemma \ref{connect} builds on this characterization to upper bound $y^t_c(y^1)$ in two crucial scenarios. Finally, Lemma \ref{combine} fills in the gaps and puts the various pieces together. The next lemma gives a structural result that will be used to prove Lemma \ref{y1}. 
\begin{lemma}\label{early}
Consider a value $z\in[0,1]$ such that for $y^1=z$, $i$ is matched to some arrival, say $\tau(z)$, in period $k(t)-1$. Then, for every $y^1\leq z$, $i$ is matched in period $k(t)-1$ to arrival $\tau(z)$ or an arrival that precedes it.
	\end{lemma}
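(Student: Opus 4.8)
The plan is to prove this monotonicity by directly comparing two executions of \alg\ that are identical except for the seed $y^1 = y^{k(t)-1}_i$. Fix an arbitrary $z' \le z$ and consider the two runs with $y^1 = z$ and $y^1 = z'$, keeping $Y_{-it}$ (and $y^2$) fixed; since the seed of period $k(t)$ and later plays no role during period $k(t)-1$, the value of $y^2$ is irrelevant here. The single effect of lowering the seed from $z$ to $z'$ is that the reduced price $r_i(1 - g(y^1))$ of resource $i$ in period $k(t)-1$ increases, because $g$ is monotonically increasing. A key preliminary observation is that the time at which $i$ becomes available during period $k(t)-1$ is governed only by the matches made in period $k(t)-2$, which are determined by seeds in $Y_{-it}$ and are therefore identical across the two runs. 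Hence, up until the moment $i$ is matched within period $k(t)-1$, resource $i$'s availability is the same in both executions.

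First I would locate the earliest arrival $\tau$ in period $k(t)-1$ at which the two runs make different matching decisions (if no such arrival exists, $i$ is matched to $\tau(z)$ in both runs and the claim is immediate). By minimality of $\tau$, all matches prior to $\tau$ coincide, so the two runs enter $\tau$ with identical sets of available resources, and in particular $i$ is available at $\tau$ in one run iff it is in the other. At $\tau$ the only change to the reduced prices is that $p_i := r_i(1-g(y^1))$ is larger in the $z'$-run; since the available neighborhood of $\tau$ and all other reduced prices are unchanged, the winner of the $\argmax$ at $\tau$ can only change in favor of $i$. Consequently the divergence at $\tau$ must be of exactly one type: $i$ is an available neighbor of $\tau$ that does not win in the $z$-run but does win, and is matched, in the $z'$-run.

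With this in hand I would finish by a short case split. If in the $z$-run $i$ was already matched at some arrival $\tau(z) \prec \tau$, then that match occurs before the divergence point and is reproduced identically in the $z'$-run, so $i$ is matched to $\tau(z)$ in both runs and the claim holds with equality. Otherwise $i$ is still unmatched when the runs reach $\tau$, and the divergence forces $i$ to be matched to $\tau$ in the $z'$-run; since $i$ is matched at most once per period, it is matched only there. It then remains to verify $\tau \preceq \tau(z)$: the case $\tau(z) \prec \tau$ is the one just handled, while $\tau = \tau(z)$ is impossible, since then $i$ would also win at $\tau$ in the $z$-run, contradicting divergence. Hence $\tau \prec \tau(z)$, i.e.\ $i$ is matched strictly earlier. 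In every case $i$ is matched in period $k(t)-1$ to $\tau(z)$ or an arrival preceding it, as required.

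I expect the main obstacle to be conceptual rather than computational: one must be sure that the cascade of availability changes triggered after $\tau$ (matching $i$ to $\tau$ occupies a different resource and thereby perturbs all subsequent matches) does not interfere with the conclusion. The resolution is that the first-divergence argument already pins down $i$'s match in the $z'$-run at or before $\tau$, and $\tau \preceq \tau(z)$, so the downstream reshuffling never needs to be tracked. The one technical point requiring care is the preliminary claim that $i$'s availability within period $k(t)-1$ (prior to being matched in that period) is independent of $y^1$; this is exactly what guarantees the two runs share a common state up to the divergence, and it is the place where the periodic, length-$d$ reranking structure enters.
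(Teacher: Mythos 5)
Your proof is correct and follows essentially the same route as the paper's: both arguments exploit that lowering $y^1$ only raises $i$'s reduced price in period $k(t)-1$, so the two executions coincide until a first point of divergence, which can only consist of $i$ being matched (weakly) earlier in the $z'$-run. Your version merely makes the ``first divergence'' bookkeeping explicit where the paper states it more tersely, so there is nothing substantive to change.
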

\begin{proof}{Proof.}
Recall that except $y^1$ and $y^2$, all seeds are fixed. The value of $y^2$ does not affect the output of \alg\ in periods prior to $k(t)$. Similarly, the value of $y^1$ does not affect the matching prior to period $k(t)-1$. 
Since every resource can be matched at most once during a single period, when $y^1=z$, $\tau(z)$ is the unique arrival matched to $i$ during period $k(t)-1$. 

Now, let $r_i(y^1)=r_i(1-g(y^1))$ and consider the change in the matching during period $k(t)-1$ as we vary $y^1$ in the interval $(0,z)$. 
Suppose there exists a value $y^1 = z'$, with $z'<z$, such that $i$ is not matched prior to $\tau(z)$ in period $k(t)-1$ (if no such value exists, we are done). Then, for $y^1=z'$, $i$ is available at $\tau(z)$ and the matching prior to $\tau(z)$ is identical to the matching when $y^1=z$. Hence, the set of resources available at $\tau(z)$ is identical for both values of $y^1$. Since $r_i(z')\geq r_i(z)$ (by monotonicity of function $g$ for $\beta>0$), $i$ must be matched to $\tau(z)$ when $y^1=z'$. This completes the proof.
\hfill\Halmos
\end{proof}
\begin{lemma}\label{y1}
There exists values $z_1, z_2 \in [0,1]$, such that $z_1\leq z_2$ and, 
\begin{enumerate}[(a)]
	\item $i\in S_{p(t)}(y^1,1) 
	\quad \forall y^1 \in (z_2,1]$ and $i$ is not matched to any arrival in period $k(t)-1$. 
	\item $i\not\in S_{p(t)}(y^1,1)
	 \quad \forall y^1\in(z_1,z_2)$ and $i$ is matched to some arrival in period $k(t)-1$.  
	 	\item $i\in S_{p(t)}(y^1,1) 
	 \quad \forall y^1 \in [0,z_1)$ and $i$ is matched to some arrival in period $k(t)-1$. 
\end{enumerate}
	\end{lemma}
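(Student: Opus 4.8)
The plan is to produce the two thresholds from two nested \emph{downward-closure} arguments, both powered by Lemma~\ref{early}. The underlying picture is monotone in $y^1$: decreasing $y^1$ raises the reduced price $r_i(1-g(y^1))$ of $i$ in period $k(t)-1$, so by Lemma~\ref{early} $i$ is matched at least as early in that period and hence returns at least as early in period $k(t)$. First I would isolate the set of $y^1$ for which $i$ is matched at all in period $k(t)-1$, which yields $z_2$; then, inside that set, I would isolate the subset for which $i$ has returned by time $a(t)$, which yields $z_1$.

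The starting observation underlies part (a). Because each period has length $d$ and the usage duration is exactly $d$, a resource matched at time $\tau$ in period $k(t)-1$ is busy on $(\tau,\tau+d]$ and returns at $\tau+d$, which necessarily lies in period $k(t)$; conversely a match in any earlier period returns no later than the end of period $k(t)-1$. Hence the \emph{only} way $i$ can be unavailable at any arrival in $p(t)$ is to be matched in period $k(t)-1$ (recall that $y^2=1$ gives $i$ reduced price $0$ in period $k(t)$, so $i$ is never matched there). In particular, if $i$ is not matched in period $k(t)-1$ then $i$ is available throughout $p(t)$, i.e.\ $i\in S_{p(t)}(y^1,1)$. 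By Lemma~\ref{early}, the set $M=\{y^1 : i \text{ is matched in period } k(t)-1\}$ is downward closed, so it is an interval containing $0$ (possibly empty); set $z_2=\sup M$, with the convention $z_2=0$ when $M\subseteq\{0\}$. Then every $y^1\in(z_2,1]$ lies outside $M$, giving part (a), while $[0,z_2)\subseteq M$.

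It remains to split $M$ according to availability at $t$. Since $i$ is never rematched in period $k(t)$ (as $y^2=1$), once $i$ returns it stays available through $p(t)$; thus for $y^1\in M$ we have $i\in S_{p(t)}(y^1,1)$ iff $\rho(y^1)+d\le a(t)$, where $\rho(y^1)$ denotes the arrival time of $i$'s match in period $k(t)-1$. Lemma~\ref{early} shows $\rho$ is non-decreasing in $y^1$ on $M$ (smaller $y^1$ forces a match to the same or an earlier arrival), so the availability condition $\rho(y^1)\le a(t)-d$ is itself downward closed on $M$. Defining $z_1=\sup\{y^1\in M : i\in S_{p(t)}(y^1,1)\}$ (with $z_1=0$ when this set is $\subseteq\{0\}$) then gives part (c) on $[0,z_1)$ and part (b) on $(z_1,z_2)$, and $z_1\le z_2$ because the availability set is contained in $M$.

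The main obstacle, and the only place where care is genuinely needed, is the monotone translation in the last step: turning the conclusion of Lemma~\ref{early}, ``$i$ is matched to $\tau(z)$ or an earlier arrival,'' into ``$i$ returns no later, hence remains available on $p(t)$.'' This relies on pinning down that returns from period $k(t)-1$ land inside period $k(t)$ (the alignment observation above) and on the interval bookkeeping at the endpoint $a(t)$ and at the period boundaries, so that the three regions are honestly open intervals. The degenerate cases ($M=\emptyset$, $z_1=z_2$, $z_2=1$) should then be absorbed by the $\sup$ conventions rather than requiring separate treatment.
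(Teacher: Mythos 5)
Your proposal is correct and takes essentially the same route as the paper: both define $z_2$ as the supremum of the set of $y^1$ for which $i$ is matched in period $k(t)-1$ and $z_1$ as the supremum of the subset for which $i$ is nonetheless available at some arrival in $p(t)$, with Lemma~\ref{early} supplying the downward-closure of both sets. Your explicit reduction of availability to the condition $\rho(y^1)+d\le a(t)$ only spells out a step the paper leaves implicit.
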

\begin{proof}{Proof.}
	Recall that $S_{p(t)}(y^1,1)$ denotes the set of all resources that are available at some point of time in interval $p(t)$. Observe that the value of $y^2$ does not influence the scenario i.e., whether $i$ is in (or not in) $S_{p(t)}(y^1,1)$. 
	
Let $z_2\in[0,1]$ be the highest value 
such that 
	for $y^1=z_2$, $i$ is matched in period $k(t)-1$. Set $z_2=0$ is no such value exists. 
	From Lemma \ref{early}, we have that for every $y^1<z_2$, $i$ will continue to be matched in period $k(t)-1$ and in fact, to (possibly) earlier arrivals. Thus, there exists a unique threshold $z_2\in(0,1)$ such that $i$ is matched in period $k(t)-1$ for every $y^1\leq z_2$, and unmatched in period $k(t)-1$ for every $y^1>z_2$. If $i$ is unmatched in period $k(t)-1$, then $i\in S_{p(t)}(y^1,1)$. This gives us part $(a)$ of the lemma. 
	
	Next, let $z_1\in[0,z_2]$ be the highest value such that $i\in S_{p(t)}(y^1,1)$ for for $y^1=z_1$. In other words, $i$ returns from its match in $k(t)-1$ in time to be available at some arrival in $p(t)$. Set $z_1=0$ if no such value exists. From Lemma \ref{early}, for every $y^1< z_1$, $i$ is matched (possibly) even earlier in period $k(t)-1$. Therefore, $i\in S_{p(t)}(y^1,1)$ for every $y^1< z_1$. This corresponds to part $(c)$ of the lemma. 
	
	Finally, by definitions of thresholds $z^1$ and $z^2$, when $y^1\in(z_1,z_2)$, we have that $y^1$ is matched in period $k(t)-1$ but $i\not\in S_{p(t)}(y^1,1)$. This corresponds to part $(b)$. 
\hfill\Halmos\end{proof}
\begin{lemma}\label{connect} The following statements are true.
	\begin{enumerate}[(a)]
		\item For every $y^1\in(z_2,1]$, 
		we have $y^c_t(y^1)= y^c_t(1)$.
		\item For every $y^1\in(z_1,z_2)$, 
		we have $y^c_t(y^1)\leq y^c_t(1)$.
	\end{enumerate}
\end{lemma}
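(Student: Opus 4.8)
Both parts reduce to comparing the largest reduced price offered at arrival $t$ across two runs. By the defining equation of the critical threshold and the strict monotonicity of $g$ (so that $1-g$ is strictly decreasing), proving $y^c_t(y^1)=y^c_t(1)$ amounts to showing that the quantity $\max_{j\in S_t(\cdot,1),\,(j,t)\in E} r_j(1-g(y^{k(t)}_j))$ is equal across the two runs, and proving $y^c_t(y^1)\le y^c_t(1)$ amounts to showing that this quantity is \emph{weakly larger} in $\alg(y^1,1)$ than in $\alg(1,1)$. The plan is therefore to fix $y^2=1$ and track, for each part, how the availability and reduced prices of resources at $t$ differ between the run indexed by $y^1$ and the run indexed by $1$.

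For part $(a)$, I would first show that the run $\alg(y^1,1)$ is identical for all $y^1\in(z_2,1]$, and in particular coincides with $\alg(1,1)$. The only role of $y^1$ is to set the reduced price $r_i(1-g(y^1))$ of $i$ in period $k(t)-1$; by Lemma \ref{y1}$(a)$, $i$ is unmatched in that period for every such $y^1$, so $i$ loses every competition it enters regardless of the exact value of $y^1$ in this range, and the matches of all other resources in period $k(t)-1$ are left unchanged. Since the period-$(k(t)-1)$ matching determines the returns into period $k(t)$ and $y^2=1$ is fixed, the availability and reduced prices at $t$ are identical for all $y^1\in(z_2,1]$, which yields $y^c_t(y^1)=y^c_t(1)$.

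For part $(b)$, fix $y^1\in(z_1,z_2)$; by Lemma \ref{y1}$(b)$, $i$ is matched in period $k(t)-1$ to some arrival $t'$ and $i\notin S_{p(t)}(y^1,1)$, so $i$ does not return by $a(t)$ and hence $a(t')+d\ge a(t)$. Write A $=\alg(y^1,1)$ and B $=\alg(1,1)$; the two runs differ only in that $i$ competes (and wins $t'$) in period $k(t)-1$ in A, whereas $i$ has reduced price $0$ and is unmatched there in B. I would compare the two period-$(k(t)-1)$ matchings via their symmetric difference, which is an alternating path anchored at $t'$: introducing $i$ lets it seize $t'$ from its B-partner, which is then re-matched to a strictly later arrival, displacing that arrival's B-partner, and so on, producing a chain of arrivals $t'=\sigma_0<\sigma_1<\dots<\sigma_m$ along which each resource is matched strictly later in A than in B, with the final displaced resource, say $j_m$, left unmatched in A. The key point is that $t'$ is late and every $\sigma_k>t'$, so every arrival on the path is late (none returns by $a(t)$); consequently all intermediate displaced resources remain busy at $a(t)$ in both runs, and all matches at arrivals preceding $t'$ are identical in A and B. Entering period $k(t)$ the two runs then have the same availability except for a single swap: A has $j_m$ available and $i$ busy, while B has $i$ available (at reduced price $0$) and $j_m$ busy. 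As $j_m$'s period-$k(t)$ reduced price is nonnegative while $i$'s is exactly $0$, run A's sorted vector of available reduced prices dominates run B's at the start of period $k(t)$; since the returns occurring during $p(t)$ are identical in the two runs and perturbed-greedy matching preserves coordinatewise domination, evaluating at $t$ gives $\max_{j\in S_t(y^1,1)} r_j(1-g(y^{k(t)}_j))\ge \max_{j\in S_t(1,1)} r_j(1-g(y^{k(t)}_j))$, i.e.\ $y^c_t(y^1)\le y^c_t(1)$.

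The step I expect to be the main obstacle is controlling the \emph{timing} of returns along this displacement cascade. A priori, pushing resources to later arrivals in A could delay some resource's return past $a(t)$, leaving it available at $t$ in B but not in A and reversing the inequality; the argument succeeds precisely because in case $(b)$ the anchor $t'$ is late, which via $\sigma_k>t'$ forces every cascade arrival to be late and thereby neutralizes all the intermediate resources, collapsing the comparison to the single clean swap $i\leftrightarrow j_m$. Making this rigorous relies on the structural control of Lemma \ref{early} and Lemma \ref{y1} over how the scenario depends on $y^1$, on the fact that within a single period each resource is matched at most once, and on the standard monotonicity that greedy matching preserves domination of the sorted availability vector (here adapted to account for the mid-period returns in $p(t)$, which are common to both runs). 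The only remaining care is for degenerate cases—e.g.\ when $t'$ is unmatched in B so the path is trivial, or when all available neighbors at $t$ carry reduced price $0$—which are handled directly by the convention $y^c_t=0$ when the defining equation has no solution.
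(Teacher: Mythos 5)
Your proof is correct and follows essentially the same route as the paper: part $(a)$ is identical, and for part $(b)$ the paper likewise exploits that $i$ carries reduced price $0$ when $y^1=1$ (so matching $i$ is equivalent to leaving the arrival unmatched) and that every arrival from $\tau(z)$ to $t$ lies in $[a(t)-d,a(t)]$, so each resource is matched at most once among them and displaced resources stay busy through $a(t)$. The only difference is presentational: where you trace the displacement cascade and appeal to ``sorted-vector domination'' (which, stated for price vectors, is not quite the right invariant since edges are resource-specific), the paper proves the single edge-aware inclusion $S_\tau(1,1)\backslash\{i\}\subseteq S_\tau(y^1,1)$ for all such $\tau$ by induction (inequality \eqref{nest}), which is the precise form of the propagation step you describe.
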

\begin{proof}{Proof.}
From Lemma \ref{y1}$(a)$, we have that for every $y^1\in(z_2,1]$, resource $i$ is unmatched in period $k(t)-1$. Therefore, with $y^2$ fixed at 1, the matching output by \alg\ is identical for every value of $y^1>z_2$. This proves part $(a)$.

Let $r_\tau(y^1,y^2)$ denote the reduced price of the resource matched to arrival $\tau\in T$ in the matching $\alg(y^1,y^2)$. Set $r_\tau(y^1,y^2)=0$ if $\tau$ is unmatched. To prove part $(b)$, fix an arbitrary value $y^1=z\in(z_1,z_2)$ and consider the matching $\alg(z,1)$. From Lemma \ref{y1}$(b)$, we have that $i$ is matched in period $k(t)-1$ but does not return prior to arrival $t$. Let $\tau(z)$ denote the arrival matched to $i$ in period $k(t)-1$ 
and let $T'=\{\tau \in T \mid \tau'\,\leq\, \tau\,\leq\, t\}$. Observe that every arrival $\tau\in T'$, is in the interval $[a(t)-d,a(t)]$. Thus, every resource is matched to at most one arrival in $T'$. 
Now, given that $g(x)$ is strictly increasing in $x$ (for $\beta>0$), to prove $y^c_t(z)\leq y^c_t(1)$, it suffices to show that 
 $r_\tau(y^1,1)\geq r_\tau(1,1)\,\,\, \forall\, y^1\in(z_1,z_2),\,\, \tau\in T'$. Note that $1-g(1)=0$ for every $\beta$. Therefore, when $y^1=1$, the reduced price of arrival matched to $i$ is 0, same as if the arrival were unmatched. Combining this observation with the fact that \alg\ matches each arrival greedily based on reduced prices,  $r_\tau(y^1,1)\geq r_\tau(1,1)$ follows from, 
 \begin{equation}\label{nest}
 	S_\tau(1,1)\backslash\{i\}\subseteq S_\tau(y^1,1)\quad \forall y^1\in(z_1,z_2),\, \tau\in T'.
 \end{equation}
We prove \eqref{nest} via induction over the set $T'$. The first arrival in $T'$ is $\tau(z)$. From Lemma \ref{early}, prior to $\tau(z)$, resource $i$ is not matched to any arrival in period $k(t)-1$ in the matching $\alg(1,1)$. Thus, $\alg(1,1)$ and $\alg(z,1)$ are identical prior to $\tau(z)$ and $S_{\tau(z)}(1,1)= S_{\tau(z)}(z,1)$. Now, suppose that \eqref{nest} holds for all arrivals $\tau< \tau\in T'$. We show that \eqref{nest} holds for arrival $\tau'$ as well.

For the sake of contradiction, suppose there exists a resource $j\in S_{\tau'}(1,1)\backslash (S_{\tau'}(z,1)\cup \{i\})$. 
Recall that $S_{\tau}(1,1)\backslash\{i\}\subseteq S_\tau(z,1)$ for all $\tau<\tau'$. Thus, $j$ is matched to arrival $\tau'-1$ in $\alg(z,1)$, where $\tau'-1>\tau(z)$. Since every resource is matched to at most one arrival in $T'$, we have, 
$S_{\tau'}(1,1)\subseteq S_{\tau'-1}(1,1)$. Thus, $j\in S_{\tau'-1}(1,1)\backslash\{i\}\subseteq S_{\tau'-1}(z,1)$ i.e., in $\alg(1,1)$, resource $j$ is available but not matched to $\tau'-1$. This contradicts the fact that \alg\ matches greedily based on reduced prices. 

\hfill\Halmos

\end{proof}

The next lemma combines all possible scenarios given in Lemma \ref{y1} to lower bound \eqref{dual2c}. Then, the proof of Theorem \ref{ref} follows via standard algebraic arguments.

\begin{lemma}\label{combine}
	Let $g(x)=e^{\beta(x-1)}$ for some $\beta\in(0,1]$. Let $G(x)$ be the antiderivative of $g(x)$. 
	There exists values $z_1, z_2\in[0,1]$ with $z_1\leq z_2$, such that
\begin{eqnarray}
	&&E_{y^1, y^2}\left[\lambda_t(y^1,y^2) + \sum_{\tau=t}^{t(d)} 
	\theta_{i\tau}(y^1,y^2)\right]\nonumber\\
	&&\quad \geq  r_i\, \Bigg[G(z_2)-G(z_1)+ (1-g(y^c_t(1)))(1-z_1)+(1-z_2)\left(G(y^c_t(1))-G(0)\right)
	+z_1(1-g(0))
	\Bigg].\nonumber
\end{eqnarray}
\end{lemma}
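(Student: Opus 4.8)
The plan is to condition on the value of $y^1=y^{k(t)-1}_i$, split $[0,1]$ into the three intervals identified in Lemma \ref{y1} (whose thresholds $z_1\le z_2$ are precisely the $z_1,z_2$ in the statement), lower bound the inner expectation $\int_0^1\big[\lambda_t(y^1,y^2)+\sum_{\tau=t}^{t(d)}\theta_{i\tau}(y^1,y^2)\big]\,dy^2$ separately on each interval, and then integrate the three bounds over $y^1$ and add them. Write $c:=y^c_t(1)$. It is convenient to introduce $h(x):=1-g(x)+G(x)-G(0)$, which, after integrating the two bounds of Lemma \ref{devlb} over $y^2$, is the natural per-unit contribution whenever $i\in S_{p(t)}(y^1,1)$. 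A short computation gives $h'(x)=g(x)-g'(x)=(1-\beta)g(x)\ge 0$ for $\beta\in(0,1]$, so $h$ is nondecreasing; this monotonicity, which is exactly where the hypothesis $\beta\le 1$ enters, is what lets me reduce an uncontrolled critical threshold to its worst case.

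On the intervals $y^1\in(z_2,1]$ and $y^1\in[0,z_1)$ we have $i\in S_{p(t)}(y^1,1)$, so Lemma \ref{devlb} applies. Combining part (a) with part (b) and integrating over $y^2$ yields $\int_0^1[\lambda_t+\sum_{\tau=t}^{t(d)}\theta_{i\tau}]\,dy^2\ge r_i\big[(1-g(y^c_t(y^1)))+(G(y^c_t(y^1))-G(0))\big]=r_i\,h(y^c_t(y^1))$. For $y^1\in(z_2,1]$ I invoke Lemma \ref{connect}(a) to replace $y^c_t(y^1)$ by the constant $c$ exactly, keeping $h(c)$, and integrate over an interval of length $1-z_2$. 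For $y^1\in[0,z_1)$ I have no control on $y^c_t(y^1)$, so I use monotonicity of $h$ to bound $h(y^c_t(y^1))\ge h(0)=1-g(0)$, and integrate over an interval of length $z_1$; this produces the term $z_1(1-g(0))$.

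On the middle interval $y^1\in(z_1,z_2)$ we have $i\notin S_{p(t)}(y^1,1)$, so Lemma \ref{devlb} is unavailable and the two pieces must be handled separately. For the $\theta$ term, Lemma \ref{unavail} gives $\sum_{\tau=t}^{t(d)}\theta_{i\tau}(y^1,y^2)\ge r_i\,g(y^1)$ for every $y^2$; integrating $g(y^1)$ over $(z_1,z_2)$ contributes $G(z_2)-G(z_1)$. For the $\lambda_t$ term I use that, since $i$ is matched in period $k(t)-1$ and does not return before $t$, resource $i$ is unavailable at $t$ for every $y^2$; hence the match at $t$ is frozen across $y^2$ and coincides with the match in $\alg(y^1,1)$, so by the definition of the critical threshold $\lambda_t(y^1,y^2)=r_i(1-g(y^c_t(y^1)))$. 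Lemma \ref{connect}(b) then gives $y^c_t(y^1)\le c$, hence $\lambda_t(y^1,y^2)\ge r_i(1-g(c))$; integrating over an interval of length $z_2-z_1$ contributes $(1-g(c))(z_2-z_1)$.

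Adding the three contributions and collecting the $\lambda_t$ pieces (the lengths $1-z_2$ from the first interval and $z_2-z_1$ from the middle interval merge into $1-z_1$ multiplying $(1-g(c))$) reproduces the claimed bound. I expect the main obstacle to be the $\lambda_t$ estimate on the middle interval: unlike the classic OM analysis, when $i$ is unavailable at $t$ one cannot appeal to Lemma \ref{devlb}, and one must instead argue that $i$'s unavailability is determined by $y^1$ alone so that the match at $t$ does not vary with $y^2$, and then transfer the bound to $c=y^c_t(1)$ via Lemma \ref{connect}(b). The secondary subtlety is the worst-case reduction $h(y^c_t(y^1))\ge h(0)$ on $[0,z_1)$, which hinges on monotonicity of $h$ and thus on $\beta\le 1$.
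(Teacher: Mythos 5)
Your proposal is correct and follows essentially the same route as the paper: the same three-way conditioning on $y^1$ via the thresholds of Lemma \ref{y1}, the same use of Lemmas \ref{devlb}, \ref{unavail}, and \ref{connect} on each piece, and the same monotonicity argument (your $h$ is exactly the paper's $1-g(x)+\tfrac{1}{\beta}(g(x)-g(0))$, with $\beta\le 1$ entering at the same point to justify $h(y^c_t(y^1))\ge h(0)$ on $[0,z_1)$). The handling of the middle interval's $\lambda_t$ term — freezing the match at $t$ across $y^2$ because $i$ is unavailable there, then transferring to $y^c_t(1)$ via Lemma \ref{connect}(b) — is also precisely the paper's argument.
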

\begin{proof}{Proof.} Observe that for any random variable $X$ derived from $y^1, y^2$, we have,
	\[E_{y^1,y^2}[X] = (1-z_2)\, E_{y^1,y^2}\left[X\mid y^1>z_2\right]+(z_2-z_1)\, E_{y^1,y^2}\left[X\mid y^1\in(z_1,z_2)\right]+z_1\, E_{y^1,y^2}\left[X\mid y^1<z_1\right].\]
	We prove the main claim by establishing lower bounds on each of the three terms on the RHS. 
\smallskip

	\noindent \textbf{Case I:} $\mb{y^1>z_2}$.
From Lemma \ref{y1}$(a)$, when $y^1>z_2$, $i$ is not matched in period $k(t)-1$. Thus, $\alg(y^1,y^2)=\alg(1,y^2)\,\,\,\forall y^1>z_2$ i.e., in this case, the matching does not change with $y^1$. From Lemma \ref{devlb}$(a)$ and Lemma \ref{connect}$(a)$, we have
\[\lambda_t(y^1,y_2)\geq r_i\,\left(1-g(y^c_t(1))\right)\quad \forall y^1\in(z_2,1],\, y^2\in[0,1].\]
Taking expectation over randomness in $y^1, y^2$, we have 
\[E_{y^1}\left[E_{y^2}[\lambda_t(y^1,y^2) \mid y^1> z_2]\right]\geq E_{y^1}\left[r_i\, \left(1-g(y^c_t(1))\right)\right] = r_i\,\left(1-g(y^c_t(1))\right).\]
 Finally, from Lemma \ref{devlb}$(b)$ and Lemma \ref{connect}$(a)$, we have,
\begin{eqnarray*}
	E_{y^1}\left[E_{y^2}\left[\sum_{\tau=t}^{t(d)}\theta_{i\tau}(y^1,y^2)\mid y^1>z_2\right]\right] &\geq &E_{y^1}\left[E_{y^2}[\onee(y^2<y^c_t(1))\, r_i g(y^2)]\mid y^1>z_2\right],\\
	&= &E_{y^1}\left[ r_i\, \int_0^{y^c_t(1)} g(x) dx\mid y^1>z_2\right],\\
	&= &r_i\,(G(y^c_t(1))-G(0)). 
	\end{eqnarray*}

\noindent \textbf{Case II:} $\mb{z_1<y^1<z_2}$. In this case, $i$ is not available in period $k(t)$ prior to arrival $t$ and the value of $y^2$ does not affect the matching until after arrival $t$. From Lemma \ref{unavail}, we have,
\[E_{y^1,y^2}\left[\sum_{\tau=t}^{t(d)} \theta_{i\tau}(y^1,y^2)\mid y^1\in(z_1,z_2)\right]\geq  r_i \int_{z_1}^{z_2} g(x)dx\,=\, r_i\, (G(z_2)-G(z_1)).\]
From Lemma \ref{connect}$(b)$, we have
\[E_{y^1,y^2}\left[\lambda_t(y^1,y^2)\mid y^1\in(z_1,z_2)\right]\,=\,E_{y^1}\left[\lambda_t(y^1,1)\mid y^1\in(z_1,z_2)\right]\,\geq\, r_i\, \left(1-g(y^c_t(1))\right). \]

\noindent \textbf{Case III:} $\mb{y^1<z_1}$. 
In this case, $i\in S_{p(t)}(y^1,1)$. From Lemma \ref{devlb}$(a)$, we have
\begin{equation}\label{c3l}
	E_{y^2}[\lambda_t(y^1,y^2)\mid y^1<z_1]\geq r_i\, \left(1-g(y^c_t(y^1))\right).
\end{equation}
From part $(b)$ of Lemma \ref{devlb}, 
\begin{equation}\label{c3t}
		E_{y^2}\left[\sum_{\tau=t}^{t(d)}\theta_{i\tau}(y^1,y^2)\mid y^1<z_1\right] \geq E_{y^2}\left[\onee(y^2<y^c_t(y^1))\, r_i g(y^2)\mid y^1<z_1\right]\,=\, r_i\,\left(G(y^c_t(y^1))-G(0)\right).
	\end{equation}
Combining \eqref{c3l} and \eqref{c3t}, we have,
\begin{eqnarray*}
	E_{y^2}\left[\lambda_t(y^1,y^2)+\sum_{\tau=t}^{t(d)}\theta_{i\tau}(y^1,y^2)\mid y^1<z_1\right] &\geq &r_i\left[1-g(y^c_t(y^1))+G(y^c_t(y^1))-G(0)\right],\\
	&= &r_i\left[1-g(y^c_t(y^1))+\frac{1}{\beta}\left(g(y^c_t(y^1))-g(0)\right)\right],\\
	&\geq &r_i \min_{x\in[0,1]}\left[1-g(x)+\frac{1}{\beta}\left(g(x)-g(0)\right)\right],\\
	&= &r_i\, (1-g(0)).
	\end{eqnarray*}
The first equality uses the fact that $G(x)=\frac{1}{\beta}g(x) + c$, where $c$ is some constant. The second equality follows from the fact that $h(x)=1-g(x)+\frac{1}{\beta}\left(g(x)-g(0)\right)$, is a non-decreasing function of $x$ for $\beta\in(0,1]$. Thus,
\[E_{y^1}\left[	E_{y^2}\left[\lambda_t(y^1,y^2)+\sum_{\tau=t}^{t(d)}\theta_{i\tau}(y^1,y^2)\mid y^1<z_1\right]\right]\geq  r_i\, (1-g(0)). \]
\smallskip
\hfill\Halmos\end{proof}

\begin{proof}{Proof of Theorem \ref{ref}.} 	Let 
	\[f(z_1,z_2,x)=G(z_2)-G(z_1)+ (1-g(x))(1-z_1)+(1-z_2)\left(G(x)-G(0)\right)	+z_1(1-g(0)).\]
	 We show that, $\min_{0\leq z_1\leq z_2\leq 1,\, x\in[0,1]} f(z_1,z_2,x)> 0.589$ for $\beta=0.89$. Then, using Lemma \ref{combine} completes the proof. First, using the fact that $G(x)=\frac{1}{\beta}g(x)+c$, where $c$ is some constant, we have,
	\begin{eqnarray}
		f(z_1,z_2,x)
		&= &\frac{1}{\beta}\left(g(z_2)-g(z_1)\right)+ (1-z_1)\left(1-g(x)\right)+\frac{1-z_2}{\beta}\left(g(x)-g(0)\right)+z_1\left(1-g(0)\right)\nonumber\\
		&= &\frac{1}{\beta}\left(g(z_2)-g(z_1)\right)+1-\frac{g(0)}{\beta}\left(1-z_2+\beta z_1\right)+\frac{g(x)}{\beta}\left( 1-z_2+\beta z_1-\beta\right)\label{exp1}
	\end{eqnarray}
To find the minimum of this function, consider the following cases.
\smallskip

\noindent \textbf{Case I: $\mb{1-z_2\geq \beta(1-z_1)}$.} In this case, \eqref{exp1} is minimized at $x=0$. Thus, 
	\begin{eqnarray*}
	f(z_1,z_2,x)&\geq &\frac{1}{\beta}\left(g(z_2)-g(z_1)\right)+1-g(0)\\
	&\geq & 1-g(0)\, =\, 1-e^{-\beta}, 
\end{eqnarray*}
where we used the fact that $g(z_2)\geq g(z_1)$ for $\beta\geq 0$ and $z_2\geq z_1$. 
\smallskip

\noindent \textbf{Case II: $\mb{1-z_2< \beta(1-z_1)}$.} In this case \eqref{exp1} is minimized at $x=1$. Thus,
	\begin{eqnarray*}
	f(z_1,z_2,x)&\geq &\frac{1}{\beta}\left(g(z_2)-g(z_1)\right) +\frac{1-g(0)}{\beta}\left( 1-z_2+\beta z_1\right)\label{exp2}
\end{eqnarray*}
Observe that the function $-e^{\beta z_1}+c\, z_1$, where $c$ is some constant, is concave in $z_1$. Thus, \eqref{exp2} is minimized at $z_1=0$ or $z_1=\min\left\{z_2,1-\frac{1-z_2}{\beta}\right\}$. In fact, $z_2\geq 1-\frac{1-z_2}{\beta}$ for every $\beta\leq 1$. Therefore,
	\begin{eqnarray*}
	f(z_1,z_2,x)&\geq &\min\left\{\frac{1}{\beta}\left(e^{\beta(z_2-1)}-e^{-\beta}\right) +\frac{1-e^{-\beta}}{\beta}\left( 1-z_2\right),\,\, \frac{1}{\beta}\left(e^{\beta(z_2-1)}-e^{z_2-1}\right) +1-e^{-\beta}\right\}\\
	&\geq &\min\left\{\frac{1}{\beta}\left(e^{\beta(z_2-1)}-e^{-\beta}\right) +\frac{1-e^{-\beta}}{\beta}\left( 1-z_2\right),\,\, 1-e^{-\beta}\right\},
\end{eqnarray*}
where we used the fact that $e^{\beta(z_2-1)}-e^{z_2-1}\geq 0$ for $\beta\leq 1$ and $z_2\in[0,1]$. 

Combining both cases, we have that
\[	f(z_1,z_2,x)\geq \min\left\{\min_{z_2\in[0,1]}\frac{1}{\beta}\left(e^{\beta(z_2-1)}-e^{-\beta}\right) +\frac{1-e^{-\beta}}{\beta}\left( 1-z_2\right),\,\, 1-e^{-\beta}\right\}.\]
The first term inside the minimum is the solution to a convex minimization problem. It is easy to (numerically) verify that when $\beta=0.89$, both terms are greater than $0.5893$, giving us the desired guarantee. We numerically tried different values of $\beta$ to arrive at the conclusion that $0.89$ is the best choice for $\beta$. 
\begin{figure}[h]\label{plot}
	\includegraphics[scale=0.4]{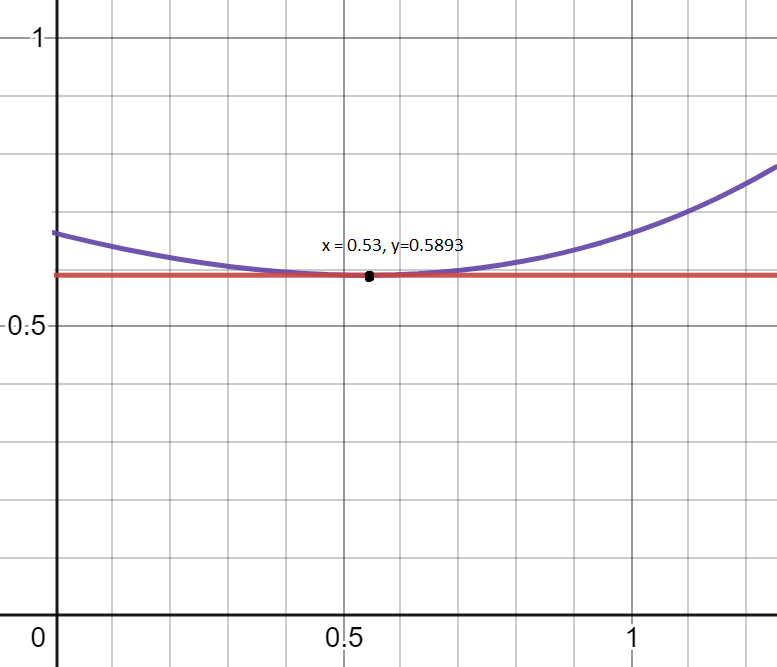}
	\centering
	\caption{With $\beta=0.89$, the straight line (red) is $y=1-e^{-\beta}> 0.5893$ and the curve (purple) is $y=\frac{1}{\beta}\left(e^{\beta(x-1)}-e^{-\beta}\right)+\frac{1-e^{-\beta}}{\beta}(1-x)\geq 0.5893$. Other values of $\beta$ lead to a lower minimum point. }
\end{figure}
When $\beta=1$, we have a minimum value of $0.554$.
\hfill\Halmos\end{proof}
\section{Reranking for Stochastic Usage}\label{sec:gen}
So far, we focused on a setting with deterministic and identical usage duration for all resources. In general, resources may be used for different durations and the time of use may vary with every match. Indeed, previous work models these scenarios by considering (known) usage \emph{distributions} $F_i\,\,\, \forall i\in I$. In the general model, when resource $i$ is matched, it is used for an independently sampled random duration $d_i\sim F_i$. To generalize \alg\ in this setting, we use insights from previous work~\citep{full} on the stochastic usage model. 
\smallskip

\noindent \textbf{$\mb{(F_i,\sigma)}$ random process~\citep{full}:}  Consider an ordered set of points $\mb{\sigma}=\{\sigma_1,\cdots,\sigma_{T}\}$ on the positive real line such that, $0<\sigma_1< \sigma_2< \cdots< \sigma_T$. These points are also referred to as arrivals. We are given a single unit of a resource that is in one of two states at every point in time: \emph{free/available} or \emph{in-use/unavailable}. 

The unit starts at time $0$ in the available state. The state of the unit evolves with time as follows. If the unit is available at arrival $\sigma_t\in \mb{\sigma}$, it switches to in-use for an independently drawn random duration $d\sim F_i$. The unit stays in-use during $(\sigma_t,\sigma_t+d)$ and switches back to being available at time $\sigma_t+d$.
\smallskip

For $i\in I$, let $A_i=\{a(t)\mid (i,t)\in E,\, t\in T \}$ denote the ordered set of arrival times for every arrival with an edge to $i$. Consider the $(F_i, A_i)$ random process and let $\eta_i(\tau)$ denote the probability that the resource is in available state at time $a(\tau)$ in process $(F_i, A_i)$. Using this, we propose the following generalization of \alg.


\begin{algorithm}[H]
	\SetAlgoNoLine
	\textbf{Inputs:} Set of resources $I$, usage distributions $\{F_i\}_{i\in I}$, parameter $\beta$\; 
	Let $g(t)=e^{\beta(t-1)}$ and $S=I$\;
	\For{\text{every new arrival } $t$}{
			\smallskip
		\For{$(i,t)\in E$}{
			 Generate new i.i.d. rank $z_i(t)\sim U[0,1]$\;
			 $y_i= \sum_{\tau\leq t}  \left(1-F_i\left(a(t)-a(\tau)\right)\right)\, \eta_i(\tau)\,z_i(\tau)$\;
		}
		\smallskip
		Update set $S$ by adding resources that returned since arrival $t-1$\;
		Match $t$ to $i^*=\underset{ i\in S,\, (i,t)\in E}{\arg\max}\quad r_i (1-g(y_i))$\;
		$S=S\backslash \{i\}$\;
	}	
	\caption{Fluid Reranking}
	\label{freerank}
\end{algorithm}
When usage durations are deterministic and identical, it can be shown that Algorithm \ref{freerank} is equivalent to \alg. We conjecture that this generalization beats greedy for the general model of OMR with stochastic reusability. 
\section{Conclusion} 
We considered a fundamental generalization of classic bipartite online matching, where resources are reusable and used for an (identical) deterministic duration on every match. We motivated and introduced a new algorithm, Periodic Reranking, that strikes a careful balance between greedy and Ranking by reranking resources on a periodic schedule. We established a guarantee of 0.589 for this algorithm using the well known primal-dual method of \cite{devanur}. To apply this framework in an effective way, we showed various novel structural properties of our algorithm. 
Finally, we proposed a generalization of our algorithm to settings with stochastic reusability. Proving a performance guarantee better than 0.5 in the stochastic model remains an open problem for unit inventory settings.
\ACKNOWLEDGMENT{The author thanks Vineet Goyal and Garud Iyengar for many insightful discussions on this topic. The setting of identical deterministic usage durations was  posed as an open problem by Rad Niazadeh during a talk at the INFORMS Revenue Management and Pricing Conference (2021).}
	{\small
	\bibliographystyle{informs2014.bst}
	\bibliography{bib.bib}
}
\begin{APPENDICES}
\section{Missing Proofs}\label{appx:missing}
\begin{repeatlemma}[Lemma \ref{devlb}.]
	Given $y^1\in[0,1]$ such that $i\in S_{p(t)}(y^1,1)$, we have, 
	\begin{enumerate}[(a)]
		\item $\lambda_t(y^1, y^2)\,\geq\,\lambda_t(y^1,1)\, \geq\, r_i \left(1-g(y^c_t(y^1))\right)\quad \forall y^2\in[0,1]$.
		\item $\sum_{\tau=t}^{t(d)} 
		\theta_{i\tau}(y^1,y^2) \, \geq\, \onee(y^2<y^c_t(y^1))\, r_i g(y^2)\quad \forall y^2\in[0,1].$
	\end{enumerate}
\end{repeatlemma}
\begin{proof}{Proof.} 
	
	\emph{Part(a)}: By definition of $y^c_t(y^1)$, we have $\lambda_t(y^1,1)\, \geq\, r_i \left(1-g(y^c_t(y^1))\right)$. It remains to show that, $\lambda_t(y^1, y^2)\,\geq\,\lambda_t(y^1,1)$. Let $r_\tau(y^1,y^2)$ denote the reduced price of the resource matched to arrival $\tau\in T$ in the matching $\alg(y^1,y^2)$. Set $r_\tau(y^1,y^2)=0$ if $\tau$ is unmatched. 
	
	Since $g(x)$ is strictly increasing in $x$ (for $\beta>0$), 
	it suffices to show that 
	$r_\tau(y^1,y^2)\geq r_\tau(y^1,1)\,\,\, \forall\, y^2\in[0,1],\,\, \tau\in k(t)$. 	Note that $1-g(1)=0$ for every $\beta$. Therefore, when $y^2=1$ the reduced price of arrival matched to $i$ is 0, same as, if the arrival were unmatched. Combining this observation with the fact that \alg\ matches each arrival greedily based on reduced prices, it suffices to show that,
	\begin{equation}\label{neste}
		S_\tau(y^1,1)\backslash\{i\}\subseteq S_\tau(y^1,y^2)\quad \forall y^2\in[0,1],\, \tau\in k(t).
	\end{equation}

	We prove \eqref{neste} via induction over arrivals in period $k(t)$. Let $t(y^2)$ denote the first arrival in period $k(t)$ where $i$ is available. Matchings $\alg(y^1,1)$ and $\alg(y^1,y^2)$ are identical prior to $t(y^2)$. Thus, $S_{t(y^2)}(y^1,1)= S_{t(y^2)}(y^1,y^2)$. Now, suppose that \eqref{neste} holds for all arrivals $\tau< \tau'\in k(t)$. We show that \eqref{neste} holds for arrival $\tau'$ as well.
	
	For the sake of contradiction, suppose there exists a resource $j\in S_{\tau'}(y^1,1)\backslash (S_{\tau'}(y^1,y^2)\cup \{i\})$.	Recall that $S_{\tau}(y^1,1)\backslash\{i\}\subseteq S_\tau(y^1,y^2)$ for all $\tau<\tau'$. 
	This occurs only if $j$ is matched to arrival $\tau'-1$ in $\alg(y^1,y^2)$, where $\tau'-1\geq t(y^2)$. 
	Since every resource is matched at most once in a period, we have, $S_{\tau'}(y^1,1)\subseteq S_{\tau'-1}(y^1,1)$. Thus, $j\in S_{\tau'-1}(y^1,1)\backslash\{i\}\subseteq S_{\tau'-1}(y^1,y^2)$ i.e., in $\alg(y^1,1)$, resource $j$ is available but not matched to $\tau'-1$. This contradicts the fact that \alg\ matches greedily based on reduced prices.
	\smallskip
	
	\noindent \emph{Part(b)}: Let $T'$ denote the set of arrivals in the interval $k(t)\cap [a(t)-d,a(t)]$. We are given that resource is available at some point in this interval. Further, every resource is matched to at most one arrival in $T'$. Consider an arbitrary value $y^2=z\in[0,1]$ and the matching $\alg(y^1,z)$.

First, if $i$ is matched to an arrival in $T'$, then by definition of dual variables \eqref{theta} and $t(d)$, we have that, $\sum_{\tau=t}^{t(d)} \theta_{i\tau}(y^1,z) \geq r_i g(z)$. On the other hand, if $i$ is not matched to any arrival in $T'$, then, $\alg(y^1,z)$ is identical to $\alg(y^1,1)$ until after arrival $t$. Therefore, in $\alg(y^1,z)$, $t$ is matched to the same resource as in $\alg(y^1,1)$, despite $i$ being available. Thus, 
$r_i(1-g(z))<r_i(1-g(y^c_t(y^1)))$ i.e., $z>y^c_t(y^1)$.

\hfill\Halmos\end{proof}
\end{APPENDICES}
	\end{document}